\def\eqref#1{equation~\ref{#1}}
\def\1{\bm{1}}
\DeclareMathAlphabet{\mathsfit}{\encodingdefault}{\sfdefault}{m}{sl}
\SetMathAlphabet{\mathsfit}{bold}{\encodingdefault}{\sfdefault}{bx}{n}
\theoremstyle{plain}
\newtheorem{theorem}{Theorem}[section]
\newtheorem{lemma}[theorem]{Lemma}
\theoremstyle{remark}
\newcommand{\methodname}{{\sc UWBench}}
\newcommand{\x}{\bm{x}}
\newcommand{\sk}{\textup{\textsf{sk}}}
\title{Analyzing and Evaluating Unbiased Language Model Watermark}
\author{%
  Yihan Wu\thanks{Equal contribution}~, ~Xuehao Cui$^*$, Ruibo Chen, ~Heng Huang \\
  Department of Computer Science\\
  University of Maryland, College Park\\
  \texttt{\{ywu42, cedrus, rbchen, heng\}}@umd.edu}
\begin{document}

\maketitle

\begin{abstract}
Verifying the authenticity of AI-generated text has become increasingly important with the rapid advancement of large language models, and unbiased watermarking has emerged as a promising approach due to its ability to preserve output distribution without degrading quality. However, recent work reveals that unbiased watermarks can accumulate distributional bias over multiple generations and that existing robustness evaluations are inconsistent across studies. To address these issues, we introduce \methodname, the first open-source benchmark dedicated to the principled evaluation of unbiased watermarking methods. Our framework combines theoretical and empirical contributions: we propose a statistical metric to quantify multi-batch distribution drift, prove an impossibility result showing that no unbiased watermark can perfectly preserve the distribution under infinite queries, and develop a formal analysis of robustness against token-level modification attacks. Complementing this theory, we establish a three-axis evaluation protocol—unbiasedness, detectability, and robustness—and show that token modification attacks provide more stable robustness assessments than paraphrasing-based methods. Together, \methodname\ offers the community a standardized and reproducible platform for advancing the design and evaluation of unbiased watermarking algorithms.
\end{abstract}


\section{Introduction}

As the capabilities of large language models have grown
significantly in recent years, verifying the authenticity and origin of AI-generated content has become increasingly critical. Watermarking language models \citep{Aaronson2022,kirchenbauer2023watermark,christ2023undetectable,kuditipudi2023robust,hu2023unbiased,wu2023dipmark,chen2024mark,chen2024enhancing,chen2025improved,mao2024watermark,dathathri2024scalable} has emerged as a promising solution for distinguishing machine-generated text from human-authored content. These methods embed covert statistical signals into the generation process using specific keys, allowing downstream detection via statistical hypothesis testing to verify authorship without degrading fluency.

A particularly important class of these methods is unbiased watermarking, which aims to preserve the original distribution of the language model’s outputs. Such methods are crucial for practical deployment since they do not introduce detectable distortions or degrade generation quality \citep{Aaronson2022,christ2023undetectable,kuditipudi2023robust,hu2023unbiased,wu2023dipmark,chen2025improved,mao2024watermark,dathathri2024scalable}. However, recent studies have revealed important limitations. While unbiased watermarks may preserve the output distribution in expectation, their statistical properties can drift over multiple generations, leading to distribution bias that violates the original unbiasedness guarantees \citep{christ2023undetectable,kuditipudi2023robust,hu2023unbiased}. Moreover, robustness evaluations in prior work are fragmented: different methods are tested against different adversaries using inconsistent protocols, leaving a gap in standardized, comparable assessment.

To address these challenges, we introduce \methodname, the first open-source benchmark specifically designed for the analysis and evaluation of unbiased watermarking algorithms. Our framework offers both theoretical foundations and practical tools to facilitate principled comparisons. On the theoretical front, we propose a statistical metric that quantifies distributional shift across batches of generated texts, enabling evaluation of long-term bias. We further prove a general impossibility result: no unbiased watermark can perfectly preserve the model’s output distribution under an infinite query budget. Finally, we develop a formal framework for analyzing the robustness of unbiased watermarking algorithms against token-level modification attacks, showing that such attacks can be resisted under certain structural assumptions.

In addition to the theoretical contributions, we provide a comprehensive empirical toolkit for benchmarking existing and future unbiased watermarking algorithms. We establish a three-axis evaluation protocol—unbiasedness, detectability, and robustness—that provides a holistic view of watermark performance. Notably, we revisit common adversarial attacks and demonstrate that paraphrasing-based evaluations suffer from high variance and inconsistent results, potentially leading to misleading conclusions. In contrast, token modification attacks yield more stable and reliable robustness assessments, making them a preferred choice for empirical benchmarking.

Our main contributions are summarized as follows:
\begin{itemize}
\item We introduce \methodname, an open-source benchmark designed specifically for evaluating unbiased watermarking methods in language models, with support for systematic and reproducible comparisons.

\item We propose a multi-batch distribution bias metric and prove a fundamental limitation: no unbiased watermark can preserve the model’s output distribution under unlimited queries. We also develop a theoretical framework for analyzing robustness against token-level attacks.

\item We establish a three-axis evaluation protocol—unbiasedness, detectability, and robustness—and show that token modification attacks offer more stable and reliable robustness assessments than paraphrasing-based attacks.

\end{itemize}
\begin{figure}[t]
  \centering
  \includegraphics[width=\linewidth]{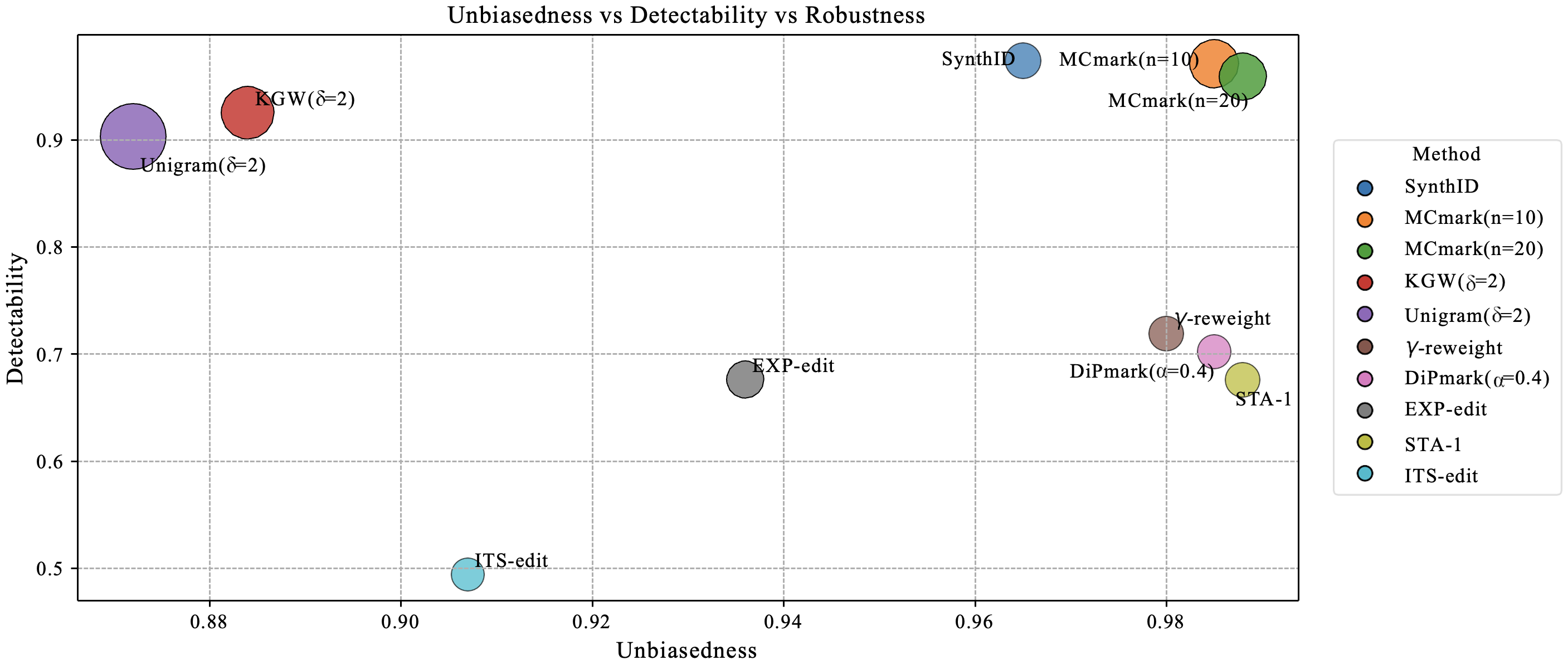}
  \vspace{-0.7cm}
  \caption{Overall benchmarking results of unbiasedness (x-axis), detectability (y-axis), and robustness (encoded with marker size) on different language model watermarking methods. Points further to the right and higher indicate better unbiasedness and detectability; larger markers indicate greater robustness.}
  \vspace{-0.5cm}
  \label{fig:overall_tradeoff}
\end{figure}

\section{Related Work}
\textbf{Statistical watermarks.} \cite{kirchenbauer2023watermark} enhanced the statistical watermark framework originally introduced by \cite{Aaronson2022}, demonstrating the effectiveness of statistical watermarking through extensive experiments on large language models. 
They splited the LM tokens into red and green list, then promoted the use of green tokens by adding a fixed parameter $\delta$ to their logits. \cite{zhao2023provable} proposed the unigram watermark, which enhances the robustness of the statistical watermark by using one-gram hashing to produce watermark keys. \cite{liu2023semantic} also improved the robustness of statistical watermarking by leveraging the semantics of generated content as watermark keys. Additionally, \cite{liu2023unforgeable} proposed an unforgeable watermark scheme that employs neural networks to modify token distributions instead of using traditional watermark keys. However, these approaches may lead to significant changes in the distribution of generated text, potentially compromising content quality.

\noindent\textbf{Unbiased watermarks.} To maintain the original output distribution in watermarked content, several researchers have investigated novel approaches for token distribution modification. \citet{Aaronson2022} pioneered an unbiased watermarking method using Gumbel-max to adjust token distribution and employing prefix n-grams as watermark keys. \citet{christ2023undetectable} used inverse sampling for modifying the token distributions of watermarked content on a binary language model with watermark keys based on token positioning. ITS-edit and EXP-edit \cite{kuditipudi2023robust} utilized inverse-sampling and Gumbel-max respectively for modifying the token distributions of watermarked content, with a predetermined watermark key list. \citet{hu2023unbiased} combined inverse-sampling and $\gamma$-reweight strategies for watermarking, though their detection method is not model-agnostic. DiPmark \cite{wu2023dipmark} enhanced the $\gamma$-reweight technique and introduced a model-agnostic detector. STA-1 \cite{mao2024watermark} optimized the quality of the watermarked text under the low-entropy scenarios. \citet{dathathri2024scalable} proposed SynthID, which enables distortion-freeness of LM watermarking with multiple generations. \citet{chen2025improved} introduced MCmark, which significantly improved the detectability of the unbiased watermark.

\textbf{LM watermarking benchmarks.} 
WaterBench~\citep{tu2023waterbench} provides a comprehensive benchmark for LLM watermarking methods. It standardizes watermarking strength by tuning each method’s hyperparameters to a common level, and then jointly evaluates both generation quality and detection performance. MarkMyWords~\citep{piet2025markmywords} evaluates LLM watermarks along three dimensions: generation quality, detection efficiency (measured by the number of tokens required), and robustness. MarkLLM~\citep{pan2024markllm} introduces an open-source toolkit that offers a unified, extensible framework for implementing LLM watermarking algorithms, along with user-friendly interfaces to facilitate broader adoption.
However, most of the watermarking methods covered in these benchmarks are biased. They lack a thorough analysis of unbiased watermarking techniques, and do not include evaluation metrics specifically designed for them. As such, we argue that a dedicated benchmark for unbiased watermarking is both necessary and timely.

\section{Preliminary} 
\subsection{Motivation}
Statistical watermarking has emerged as a general-purpose solution for verifying the authenticity of AI-generated content. Unlike task-specific benchmarks, statistical watermarking can be applied to \emph{any} language model and across \emph{any} downstream task without the need for collecting task-dependent datasets. Thus, for evaluating unbiased watermarking methods, building a new dataset is unnecessary and does not address the core challenges. Instead, the true value of a benchmark lies in providing principled and reliable \emph{metrics} for assessing watermark performance.

Current unbiased watermarking methods are typically evaluated along three axes: unbiasedness, detectability, and robustness. While detectability metrics are relatively well-established, existing approaches for measuring unbiasedness and robustness are inadequate. In particular, unbiasedness has so far been evaluated under a single-prompt setting, which overlooks important failure cases. We theoretically prove a fundamental impossibility: no watermarking scheme can remain unbiased when the same prompt is repeatedly queried. Motivated by this result, we propose a new metric that quantifies distributional bias under repeated queries, offering a more faithful measure of unbiasedness.

Robustness evaluation presents another challenge. Most existing work relies on paraphrasing-based adversarial attacks. However, these methods suffer from high variance and inconsistent results (See Figure~\ref{fig:high_var}), leading to unreliable conclusions. To overcome this limitation, we combine the paraphrasing-based adversarial attacks with the random token modification attacks that provides stable and reproducible assessments robustness.

In summary, UW Bench shifts the focus of watermarking evaluation away from task-specific datasets and toward theoretically grounded, reproducible, and holistic performance metrics that better capture the limitations and strengths of unbiased watermarking algorithms.
\subsection{Watermarking Setting}

\textbf{Problem Definition.}  
A language model (LM) provider aims to watermark generated text so that any user can later verify its origin, without access to the LM or the original prompt. A watermarking framework consists of two components: a \emph{watermark generator} and a \emph{watermark detector}. The generator embeds hidden statistical signals into the text, while the detector recovers these signals using hypothesis testing.

\textbf{Watermark Generator.}  
Let $P_M(\cdot \mid \bm{x}_{1:n})$ denote the LM’s distribution for predicting the $n$-th token given prefix $\bm{x}_{1:n}$. A watermark key $k \in K$ and a reweight strategy $F$ are used to construct the watermarked distribution $P_W(\cdot \mid \bm{x}_{1:n}, k) = F\big(P_M(\cdot \mid \bm{x}_{1:n}), k\big).$
The next token $x_n$ is then sampled from $P_W$ instead of $P_M$. The watermark key typically includes a \emph{secret key} $\sk$ and a \emph{context key} (e.g., $n$-gram index \citep{Aaronson2022} or token position \citep{christ2023undetectable}). This process injects a subtle statistical signal into the generated text.  

The reweight strategy is the core of watermark generation. A strategy is called \emph{distortion-free} if the resulting $P_W$ preserves the original distribution $P_M$. To date, three main families of distortion-free strategies have been proposed: (i) inverse-sampling \citep{christ2023undetectable,kuditipudi2023robust,hu2023unbiased}, (ii) Gumbel-reparametrization \citep{Aaronson2022,kuditipudi2023robust}, and (iii) permute-reweight \citep{hu2023unbiased}.

\textbf{Definition (Unbiased Watermark).}  
A watermarking scheme is \emph{unbiased} if, for any context $\bm{x}_{1:n}$, the expected distribution of the next token under watermarking matches the original LM distribution:
\[
\mathbb{E}_{k \sim \mu}[\,P_W(\cdot \mid \bm{x}_{1:n}, k)\,] = P_M(\cdot \mid \bm{x}_{1:n}),
\]
Where $\mu$ is the watermark key distribution. In other words, averaging over random watermark keys does not introduce any systematic distortion into the model’s output distribution.

\textbf{Watermark Detector.}  
The detector only has access to the watermark key $k$ and the reweight strategy $F$. Detection is posed as a hypothesis test:
$H_0: \text{Text is unwatermarked},\, H_1: \text{Text is watermarked}.
$
To test this, a score function $s: V \times K \times \mathcal{F} \to \mathbb{R}$ is applied token by token. For a sequence $\bm{x}_{1:n}$, the test statistic is $S(\bm{x}_{1:n}) = \sum_{i=1}^n s(x_i, k, F).$
If $S(\bm{x}_{1:n})$ significantly deviates from its expected value under $H_0$, the null hypothesis is rejected and the text is declared watermarked.

\section{\methodname}

\subsection{Unbiasedness Under Repeated Prompts}\label{sec:unbiasedness}

\paragraph{Unbiasedness (one-shot).}
Let $P_M(\cdot \mid \bm{x})$ be the LM distribution for prompt $\bm{x}$ and let $P_W(\cdot \mid \bm{x},k)$ be the distribution induced by a watermark with key $k \sim \mu$.
We say the watermark is \emph{unbiased} in the one-shot sense if
\begin{equation}\label{eq:unbiased}
\mathbb{E}_{k \sim \mu}\!\left[P_W(\cdot \mid \bm{x},k)\right] \;=\; P_M(\cdot \mid \bm{x}) \quad \text{for all prompts }\bm{x}.
\end{equation}

\paragraph{Impossibility under repeated prompts.}
We next state our main impossibility for repeated queries of the \emph{same} prompt under a fixed key (proof deferred to the appendix).


\begin{theorem}[Unbiasedness breaks under repeated prompts]\label{thm:repeated}
No watermarking scheme can simultaneously satisfy:
i) preservation of the original LM distribution under repeated queries of the same prompt with a fixed key $k$, \emph{and} ii) detectability.
Equivalently, any detectable scheme that is unbiased in the one-shot sense \eqref{eq:unbiased} fails to preserve $P_M$ when the same prompt is queried repeatedly under a fixed key.
\end{theorem}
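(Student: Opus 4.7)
The plan is to derive the impossibility from a standard indistinguishability argument: if the scheme really preserves $P_M$ under repeated queries with a fixed key, then watermarked and unwatermarked outputs are identically distributed, leaving no statistical signal for any detector to exploit. The first step is to translate condition (i) into a pointwise identity. Each query with fixed prompt $\bm{x}$ and fixed key $k$ produces an i.i.d.\ sample from $P_W(\cdot \mid \bm{x}, k)$; condition (i) says the empirical distribution of these samples converges to $P_M(\cdot \mid \bm{x})$ as the number of queries grows. By the strong law of large numbers (equivalently Glivenko--Cantelli on the generation space), this forces the identity
\[
P_W(\cdot \mid \bm{x}, k) \;=\; P_M(\cdot \mid \bm{x})
\]
for the fixed key $k$ and every prompt $\bm{x}$. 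For detectability to have any content under this key distribution, the identity must then hold for $\mu$-almost every $k$.

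Next, I would lift the identity from single-token conditionals to complete generations using the chain rule. An autoregressive watermark builds a full sequence $\bm{x}_{1:n}$ token by token through the conditionals $P_W(\cdot \mid \bm{x}_{1:t}, k)$. Because condition (i) is stated for all prompts, it must apply to every prefix encountered during generation, and multiplying through the chain rule yields equality of the joint laws. Hence, conditional on any such $k$, the watermarked sequence $T_W$ and the unwatermarked sequence $T_U$ induce the same distribution on finite token sequences.

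The contradiction with detectability then follows by data processing. A detector is any measurable map $D$ taking (text, key) into $\{0,1\}$, or equivalently a test statistic thresholded to a decision. By the previous step, conditional on $k$ the random variables $D(T_W, k)$ and $D(T_U, k)$ are identically distributed; marginalizing over $k \sim \mu$, the acceptance probabilities under $H_0$ and $H_1$ coincide, so any test has power equal to its size. This directly contradicts the operational requirement that a detectable scheme admit a test with non-trivial power, completing the proof.

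The delicate step is the middle lift: one-shot unbiasedness in \eqref{eq:unbiased} is only an identity \emph{in expectation} over $k \sim \mu$, whereas condition (i) demands distributional equality at a \emph{fixed} key, and the autoregressive chain rule requires this equality not only at the user-supplied prompt but at every prefix traversed during generation. The cleanest way to handle this is to fold ``preservation under repeated queries'' into a definition that quantifies over all reachable prefixes, so that the chain-rule argument goes through without extra hypotheses. A secondary subtlety is pinning down the operational meaning of ``detectability'' (e.g., existence of a size-$\alpha$ test whose power strictly exceeds $\alpha$) so that equality of the output distributions alone---independent of whether the detector observes $k$---suffices to rule it out.
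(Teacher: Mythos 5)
Your proposal is correct, but it runs in the opposite direction from the paper and with different machinery. The paper argues: detectability forces a positive-measure set of keys $k$ with $P_W(\cdot\mid\bm{x},k)\neq P_M(\cdot\mid\bm{x})$ (Lemma~\ref{lem:key-deviation}, proved by exactly the indistinguishability observation you use), and then a Bhattacharyya/product argument (Lemma~\ref{lem:amplification}) shows the $m$-fold laws separate with $\mathrm{TV}\to 1$, so preservation under repeated queries fails --- a \emph{quantitative} statement that the drift is amplified to perfect distinguishability, which is what motivates the SPMG metric. You instead prove the contrapositive: repeated-query preservation at a fixed key, via SLLN/marginalization of the product law, forces $P_W(\cdot\mid\bm{x},k)=P_M(\cdot\mid\bm{x})$ for $\mu$-a.e.\ $k$, and then data processing kills any detector. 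This is logically sufficient for the stated impossibility and is more elementary (no amplification lemma needed), but it only yields the qualitative conclusion, not the paper's growth-in-$m$ separation. Two smaller remarks: your ``delicate'' chain-rule lift is unnecessary under the paper's convention that $P_W(\cdot\mid\bm{x},k)$ and $P_M(\cdot\mid\bm{x})$ are laws over \emph{full generations}, so no prefix-by-prefix hypothesis is required; and your extraction of a.e.-$k$ equality should be phrased as following from condition (i) holding for ($\mu$-almost) every sampled-then-fixed key (or from a short variance/de Finetti-style argument if (i) is read as key-averaged), rather than from ``detectability having content.''
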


\paragraph{Single-prompt multi-generation (SPMG) unbiasedness metric.}
Theorem~\ref{thm:repeated} motivates measuring distributional deviation when a single prompt is queried multiple times with a \emph{fixed} key. 
Let $p_1,\dots,p_n$ be $n$ prompts. For each $p_i$, draw $m$ independent generations from a model $P$ (fixing decoding settings; for watermarks, the key is held fixed across the $m$ draws).
Let $\mathrm{Met}(\cdot)$ be any bounded per-generation performance surrogate (e.g., perplexity, average log-likelihood, reward score), with $|\mathrm{Met}(g)|\le A$.
Define the per-prompt SPMG mean:
$\overline{\mathrm{Met}}_i(P) \;:=\; \frac{1}{m}\sum_{j=1}^m \mathrm{Met}\!\big(g^{p_i}_j(P)\big),$
and the \emph{SPMG gap} between two models $P$ and $Q$: $\Delta\mathrm{Met}(P,Q) \;:=\; \frac{1}{n}\sum_{i=1}^n \Big|\, \overline{\mathrm{Met}}_i(P) - \overline{\mathrm{Met}}_i(Q) \,\Big|.$
Intuitively, $\Delta\mathrm{Met}(P_M,P_T)$ captures the multi-sample deviation of a test model $P_T$ from the original $P_M$ that emerges only when the same prompt is queried repeatedly.

\paragraph{Variance-controlled detection statistic.}
To factor out natural sampling noise, we compare the test model to an \emph{i.i.d. clone} of the original model. Let $P_{M'}$ be an independent model with the same distribution as $P_M$.
Define the detection statistic
\[
\mathrm{DetWmk}(P_M,P_T) \;:=\; \Delta\mathrm{Met}(P_M,P_T) \;-\; \Delta\mathrm{Met}(P_M,P_{M'}).
\]
Large positive values indicate a repeated-prompt shift beyond the intrinsic variance of $P_M$.

\begin{theorem}[McDiarmid concentration for for SPMG detection]\label{thm:wmk-detection}
Suppose $P_T$ is identically distributed with $P_M$ and $|\mathrm{Met}(g)|\le A$ for all generations $g$. Then for any $t>0$,
\begin{equation}\label{eq:blackboxdetection}
\Pr\!\Big(\big|\mathrm{DetWmk}(P_M,P_T)-\mathbb{E}[\mathrm{DetWmk}(P_M,P_T)]\big|\ge t\Big)
\;\le\; 2\exp\!\Big(-\frac{mn\,t^2}{12A^2}\Big).
\end{equation}
Equivalently, with probability at least $1-\delta$,
\[
\big|\mathrm{DetWmk}(P_M,P_T)-\mathbb{E}[\mathrm{DetWmk}(P_M,P_T)]\big|
\;\le\; A\sqrt{\frac{12\log(2/\delta)}{mn}}.
\]
\end{theorem}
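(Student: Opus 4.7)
My plan is to treat $\mathrm{DetWmk}(P_M,P_T)$ as a deterministic function of $3mn$ mutually independent random generations — namely the $m$ draws from each of $P_M$, $P_T$, and $P_{M'}$ at each of the $n$ prompts — and to apply McDiarmid's bounded-differences inequality. Once the per-coordinate bounded-difference constants are identified, the first tail bound is immediate, and the high-probability form follows by inversion in $\delta$.

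\textbf{Bounded differences.} I compute the bounded-difference constant for each type of coordinate. Suppose first that a single $P_T$-generation is replaced. Because $|\mathrm{Met}|\le A$, the per-prompt mean $\overline{\mathrm{Met}}_i(P_T)$ changes by at most $2A/m$; passing through the absolute value and the outer average $\tfrac{1}{n}\sum_i$, the term $\Delta\mathrm{Met}(P_M,P_T)$ moves by at most $2A/(mn)$, while $\Delta\mathrm{Met}(P_M,P_{M'})$ is unaffected, so $\mathrm{DetWmk}$ moves by at most $c = 2A/(mn)$. The case of a $P_{M'}$-generation is symmetric. A $P_M$-generation, however, enters \emph{both} $\Delta\mathrm{Met}$ terms, each of which can move by at most $2A/(mn)$, so the combined change in $\mathrm{DetWmk}$ is bounded by $c' = 4A/(mn)$. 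This shared-sample accounting is the one delicate point in the argument.

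\textbf{Putting it together.} Summing the squared constants over the $mn$ coordinates of weight $c'$ and the $2mn$ coordinates of weight $c$ gives $\sum_i c_i^2 = mn(c')^2 + 2mn\,c^2 = (16+8)A^2/(mn) = 24A^2/(mn)$. McDiarmid's inequality then yields
\[
\Pr\!\left(\bigl|\mathrm{DetWmk} - \mathbb{E}[\mathrm{DetWmk}]\bigr|\ge t\right) \;\le\; 2\exp\!\left(-\frac{2t^2}{\sum_i c_i^2}\right) \;=\; 2\exp\!\left(-\frac{mn\,t^2}{12A^2}\right),
\]
which is exactly \eqref{eq:blackboxdetection}. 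Setting the right-hand side equal to $\delta$ and solving for $t$ produces $t = A\sqrt{12\log(2/\delta)/(mn)}$, which is the second stated bound.

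\textbf{Expected obstacle.} The only real pitfall is the shared-sample bookkeeping: if the $P_M$-draws were treated as appearing in only one of the two $\Delta\mathrm{Met}$ terms, one would obtain the tighter-looking but incorrect constant $6A^2$ instead of $12A^2$. Note that the hypothesis that $P_T$ is identically distributed with $P_M$ is not required for the concentration step itself — McDiarmid only needs independence and the bounded-difference property — but it ensures, by symmetry between the $P_T$ and $P_{M'}$ roles, that $\mathbb{E}[\mathrm{DetWmk}]=0$, so that the theorem concentrates the statistic around a null value that is meaningful for detection.
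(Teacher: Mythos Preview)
Your proposal is correct and follows essentially the same approach as the paper: treat $\mathrm{DetWmk}$ as a function of the $3mn$ independent generations, compute bounded-difference constants $4A/(mn)$ for each $P_M$-draw (since it enters both $\Delta\mathrm{Met}$ terms) and $2A/(mn)$ for each $P_T$- and $P_{M'}$-draw, sum the squares to $24A^2/(mn)$, and apply McDiarmid. Your additional remark that the identically-distributed hypothesis is used only to center the statistic at zero (not for the concentration itself) is a nice observation that the paper's proof does not make explicit.
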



Inequality~\ref{eq:blackboxdetection} yields an $\alpha$-level threshold 
$t_\alpha \;=\; A^2 \sqrt{ \tfrac{12 \ln(1/\alpha)}{mn} }$
to control false positives when testing for repeated-prompt bias.
Consequently, SPMG-based evaluation isolates the distributional drift that Theorem~\ref{thm:repeated} predicts, while providing finite-sample guarantees for reliable detection.

\subsection{Robustness Analysis of Unbiased Watermarks}\label{sec:robustness}

\paragraph{Adversary model.}
During detection, only the text sequence is available to the verifier; hence an adversary can act solely by \emph{modifying tokens}.
We consider an edit-bounded adversary that applies up to $b$ token operations (substitution/insertion/deletion), producing an attacked sequence $\x'$. Let the detector use an additive test statistic
$S(\x) = \sum_{t=1}^{T} s_t(\x)$ with decision threshold $\tau$ (reject $H_0$ if $S(\x)\ge \tau$).
All scores are assumed bounded: $s_t(\x)\in [0,B]$.

\paragraph{Limitations of existing attack protocols.}
Prior works evaluate robustness with random token edits \citep{kirchenbauer2023watermark,kirchenbauer2023reliability}, paraphrasing \citep{kirchenbauer2023reliability}, and translation \citep{he2024can}. These are imperfect for benchmarking:
(i) random edits often severely degrade semantic quality;
(ii) paraphrasing exhibits instability across prompts and seeds;
(iii) translation is \emph{too strong}: it changes essentially all tokens, so no unbiased watermark can survive, making methods indistinguishable.
This motivates a principled, token-level robustness characterization with \emph{certificates}.

\paragraph{Token effect region.}
Let $\mathcal{C}_t(\x)$ denote the context used by the detector to score token $t$ (e.g., an $n$-gram prefix or a rolling, prefix-dependent key schedule). A modification at position $i$ can affect the scores for all tokens $t$ whose context uses $x_i$, i.e.,
$\{t:\; x_i \in \mathcal{C}_t(\x)\}$.
Define the \emph{token effect region length} of position $i$ by
$R_i(\x) \;:=\; \Big|\{t\ge i:\; x_i \in \mathcal{C}_t(\x)\}\Big|.$
For detectors keyed by an $n$-gram prefix, $R_i(\x)\le n+1$ (only $t\in[i,i+n]$ are influenced).
For position-key schedules that depend on the entire prefix (rolling hash), $R_i(\x)=T-i+1$ (all suffix tokens can be influenced).
Let $R_{\max}:=\max_i R_i(\x)$.

\paragraph{Expected score decrease under one edit.}
Write the detector as $S(\x)=\sum_{t=1}^{T} s_t(\x)$ and let $\Delta_i$ denote the \emph{expected} reduction in $S$ caused by editing token $i$, where the expectation is taken over the randomized alignment (e.g., color assignment or bit tests) induced when the context is destroyed in the affected region. Then
$\mathbb{E}\big[S(\x)-S(\x^{(i)})\big] \;=\; \sum_{t:\, x_i\in \mathcal{C}_t(\x)} \big(\mathbb{E}[s_t(\x)] - \mathbb{E}[s_t(\x^{(i)})]\big).$
Instantiations for common unbiased watermark families:

 \textbf{Green-count detectors} (e.g., $\gamma$-reweight, DiPmark, STA): $s_t\in\{0,1\}$ indicates whether token $t$ falls in the \emph{green} set. Let $P_G$ be the (empirical) fraction of green tokens under watermarking. Destroying alignment in the effect region makes green assignment effectively random, yielding an expected per-token drop of $(2P_G-1)/2$. Hence for one edit with effect length $R$,
$\mathbb{E}\big[S(\x)-S(\x^{(i)})\big]\;=\;\frac{(2P_G-1)}{2}\,R.$

\textbf{SynthID-style bit tests:} Each token contributes $m$ binary scores, $s_t=\sum_{\ell=1}^m s_{t,\ell}$ with $s_{t,\ell}\in\{0,1\}$. Let $P_s:=\mathbb{E}[s_t]$ under watermarking. Randomized alignment drives each bit to mean $1/2$, so the expected per-token drop is $(P_s-\tfrac{m}{2})$, yielding
$\mathbb{E}\big[S(\x)-S(\x^{(i)})\big]\;=\;(P_s-\tfrac{m}{2})\,R.$



\paragraph{Certified robustness.}
Because each single-token edit can affect at most $R_{\max}$ token scores and each token score changes by at most $B$, the test statistic is \emph{Lipschitz} w.r.t.\ edit distance $S(\x)-S(\x') \;\le\; b\,R_{\max}\,B \, \text{for any $b$-edit attack.}$
Hence we obtain an \emph{$\ell_0$ certified radius}:
\begin{equation}\label{eq:cert-deterministic}
S(\x)-\tau \;>\; b\,R_{\max}\,B \;\;\Longrightarrow\;\; S(\x')\ge \tau \quad \text{for all $\x'$ with $\le b$ edits.}
\end{equation}
This bound holds without distributional assumptions (worst-case guarantee).

\section{Experiments}\label{sec:experiment}
Our evaluation is organized along three axes. \emph{(i) Unbiasedness:} we measure watermarking unbiasedness in one-shot settings (machine translation and text summarization tasks; BLEU/ROUGE/BERTScore) and quantify repeated-prompt distribution shift via the SPMG metrics $\Delta\mathrm{Met}$ and the calibrated statistic $\mathrm{DetWmk}$. \emph{(ii) Detectability:} on open-ended generation (C4/MMW/Dolly CW/WaterBench) we report TPR at theoretically guaranteed FPR levels ($5\%,1\%,0.1\%$) and AUC using matched watermarked/unwatermarked sets across Llama-3.2-3B-Instruct, Mistral-7B-Instruct-v0.3, and Phi-3.5-mini-instruct. \emph{(iii) Robustness:} We use paraphrasing attack and random token modification under edit budgets. Detailed setups and hyperparameters are in Appendix~\ref{sec:detailed_experiment_setup}.

\paragraph{Baselines.}
We compare against representative \emph{unbiased} watermarking algorithms:
$\gamma$-reweight~\citep{hu2023unbiased}, DiPmark~\citep{wu2023dipmark}, MCmark~\citep{chen2025improved}, SynthID~\citep{dathathri2024scalable}, ITS-Edit~\citep{kuditipudi2023robust}, EXP-Edit~\citep{kuditipudi2023robust}, and STA-1~\citep{mao2024watermark}. Besides, we add two popular biased watermark: KGW~\cite{kirchenbauer2023watermark} and Unigram~\cite{zhao2023provable} as additional baselines.

\begin{table}[t]
\centering
\caption{Unbiasedness evaluation a) (1000 prompts, 1 generations each). We evaluate the unbiasedness of watermarking methods on text summarization and machine translation tasks.}
\vspace{5pt}
\label{tab:config1_delta}
\small{
\begin{tabular}{l|ccccc}
\toprule
\textbf{  Method}         & \multicolumn{3}{c}{\textbf{ Text Summarization}} & \multicolumn{2}{c}{\textbf{ Machine Translation}} \\ \midrule
             & \textbf{BERTScore} & \textbf{ROUGE-1} & \textbf{Perplexity} & \textbf{BERTScore} &    \textbf{BLEU} \\
\midrule
              No watermark &              0.3077 &  0.3807 &       6.39 &              0.5432 & 20.1681 \\ \midrule
              Unigram($\delta$=0.5) &              0.3080 &  0.3773 &       6.54 &              0.5436 & 20.0175 \\
Unigram($\delta$=1.0) &              0.3053 &  0.3775 &       6.85 &              0.5388 & 20.1276 \\
Unigram($\delta$=1.5) &              0.2955 &  0.3656 &       7.51 &              0.5307 & 19.5000 \\
Unigram($\delta$=2.0) &              0.2848 &  0.3566 &       8.28 &              0.5191 & 18.4838 \\
    KGW($\delta$=0.5) &              0.3012 &  0.3757 &       6.52 &              0.5472 & 20.6198 \\
    KGW($\delta$=1.0) &              0.2977 &  0.3751 &       6.85 &              0.5348 & 19.9166 \\
    KGW($\delta$=1.5) &              0.2876 &  0.3686 &       7.56 &              0.5326 & 19.2318 \\
    KGW($\delta$=2.0) &              0.2769 &  0.3619 &       8.37 &              0.5218 & 17.9401 \\ \midrule
    DIP($\alpha$=0.3) &              0.3082 &  0.3793 &       6.41 &              0.5422 & 20.2514 \\
    DIP($\alpha$=0.4) &              0.3081 &  0.3781 &       6.53 &              0.5446 & 20.4579 \\
    $\gamma$-reweight &              0.3032 &  0.3749 &       6.49 &              0.5394 & 20.5546 \\
       MCmark(n=10) &              0.3032 &  0.3755 &       6.39 &              0.5416 & 20.4171 \\
       MCmark(n=20) &              0.3054 &  0.3780 &       6.41 &              0.5486 & 20.0984 \\
       MCmark(n=50) &              0.3099 &  0.3810 &       6.45 &              0.5400 & 20.1503 \\
      MCmark(n=100) &              0.3080 &  0.3800 &       6.46 &              0.5466 & 20.6732 \\
    STA-1 &              0.3066 &  0.3793 &       6.25 &              0.5492 & 20.5561 \\
           SynthID &              0.3049 &  0.3775 &       6.37 &              0.5445 & 20.4107 \\

               EXP-Edit &              0.3114 &  0.3797 &       6.19 &              0.5458 & 20.4879 \\
               ITS-Edit &              0.3032 &  0.3749 &       6.58 &              0.5091 & 17.9904 \\
\bottomrule
\end{tabular}}
\vspace{-0.2cm}
\end{table}

\begin{table}[t]
\centering
\caption{Unbiasedness evaluation b) (10 prompts, 1000 generations each)We evaluate the unbiasedness of watermarking methods on text summarization and machine translation tasks with \textbf{SPMG} metric.}
\vspace{5pt}
\label{tab:config2_delta}
\resizebox{0.92\textwidth}{!}{%
\tiny{
\begin{tabular}{l|ccccc}
\toprule
\multicolumn{1}{l}{\textbf{Method}} & \multicolumn{3}{c}{\textbf{Text Summarization}} & \multicolumn{2}{c}{\textbf{Machine Translation}} \\
\midrule
& \textbf{ BERTScore} & \textbf{ROUGE-1} & \textbf{ Perplexity} & \textbf{BERTScore} & \textbf{BLEU} \\
\midrule
No watermark & 0.0026 & 0.0017 & 0.1828 & 0.0033 & 0.1199 \\ \midrule
Unigram($\delta$=0.5) & 0.0037 & 0.0033 & 0.2197 & 0.0074 & 0.8311 \\
Unigram($\delta$=1.0) & 0.0076 & 0.0076 & 0.6133 & 0.0181 & 1.7137 \\
Unigram($\delta$=1.5) & 0.0146 & 0.0148 & 1.3664 & 0.0293 & 2.8560 \\
Unigram($\delta$=2.0) & 0.0255 & 0.0256 & 2.4671 & 0.0423 & 3.9869 \\
KGW($\delta$=0.5) & 0.0040 & 0.0023 & 0.1051 & 0.0041 & 0.5211 \\
KGW($\delta$=1.0) & 0.0093 & 0.0062 & 0.4095 & 0.0106 & 1.1383 \\
KGW($\delta$=1.5) & 0.0177 & 0.0121 & 0.9434 & 0.0178 & 1.5189 \\ 
KGW($\delta$=2.0) & 0.0297 & 0.0199 & 1.9382 & 0.0232 & 2.0037 \\ \midrule
DIP($\alpha$=0.3) & 0.0050 & 0.0039 & 0.0484 & 0.0147 & 1.2311 \\
DIP($\alpha$=0.4) & 0.0067 & 0.0059 & 0.1772 & 0.0149 & 1.6594 \\
$\gamma$-reweight & 0.0071 & 0.0081 & 0.1570 & 0.0174 & 1.9001 \\
MCmark(n=10) & 0.0073 & 0.0033 & 0.2456 & 0.0171 & 1.5756 \\
MCmark(n=20) & 0.0066 & 0.0037 & 0.2914 & 0.0162 & 0.9958 \\
MCmark(n=50) & 0.0069 & 0.0076 & 0.2771 & 0.0153 & 0.7727 \\
MCmark(n=100) & 0.0080 & 0.0077 & 0.3068 & 0.0234 & 0.6486 \\
STA-1 & 0.0046 & 0.0035 & 0.1505 & 0.0107 & 0.8446 \\
SynthID & 0.0159 & 0.0227 & 0.8254 & 0.0377 & 2.5266 \\
EXP-Edit & 0.0422 & 0.0413 & 2.0032 & 0.0439 & 2.4104 \\
ITS-Edit & 0.0355 & 0.0533 & 1.4912 & 0.0679 & 5.0746 \\
\bottomrule
\end{tabular}
}\vspace{-0.5cm}}
\end{table}
\paragraph{Unbiasedness.}\label{sec:exp-unbiased}
Following \citet{hu2023unbiased,wu2023dipmark}, we compare task metrics between the original LM and its watermarked counterpart:
Machine translation: BLEU and BERTScore on \textsc{WMT16 ro-en}; 
Text summarization: ROUGE-1/2/L and BERTScore on \textsc{CNN/DailyMail}.
We evaluate with (a) $1000$ prompts, one generation per prompt (Table~\ref{tab:config1_delta}); and (b) $10$ prompts, $1000$ generations per prompt (SPMG, Table~\ref{tab:config2_delta}). To measure repeated-prompt bias, we adopt the SMPG gap $\Delta \text{Met}(P,Q)$
and report the calibrated statistic $\mathrm{DetWmk}(P_M,P_T) \;:=\; \Delta\mathrm{Met}(P_M,P_T) \;-\; \Delta\mathrm{Met}(P_M,P_{M'})$ with bounded $\mathrm{Met}$ (e.g. perplexity, or bounded quality scores).

\begin{table}[t]
\centering
\caption{Averaged detection performance across all language models and datasets by method. We also include two biased watermarks: KGW and Unigram for reference.}
\vspace{5pt}
\label{tab:wmk-avg-detect-param}
\resizebox{\textwidth}{!}{%
\Large{
\begin{tabular}{l|ccccc}
\toprule
\textbf{ Method} & \textbf{ TPR@FPR=5\%} & \textbf{ TPR@FPR=1\%} & \textbf{ TPR@FPR=0.1\%} & \textbf{ median p-value} & \textbf{ AUROC} \\
\toprule
    Unigram($\delta$=0.5) &      68.7\% &     53.59\% &       35.55\% &       3.72e-02 & 0.803 \\
     Unigram($\delta$=1.0) &     90.04\% &     81.57\% &       69.49\% &       3.00e-03 & 0.919 \\
     Unigram($\delta$=1.5) &     96.57\% &     93.07\% &       86.96\% &       5.56e-05 & 0.960 \\
     Unigram($\delta$=2.0) &     98.89\% &     97.85\% &       94.66\% &       1.05e-06 & 0.981 \\
     KGW($\delta$=0.5) &     61.11\% &      43.9\% &       26.86\% &       7.05e-02 & 0.863 \\
     KGW($\delta$=1.0) &     87.07\% &     79.04\% &        68.6\% &       8.70e-03 & 0.962 \\
     KGW($\delta$=1.5) &     95.67\% &     91.45\% &       86.32\% &       4.46e-04 & 0.987 \\
     KGW($\delta$=2.0) &     98.33\% &     96.57\% &       94.04\% &       1.07e-05 & 0.995 \\ \midrule
     DIP($\alpha$=0.3) &     78.92\% &     69.26\% &       58.11\% &       2.03e-02 & 0.943 \\
     DIP($\alpha$=0.4) &     82.61\% &     74.11\% &       64.73\% &       1.33e-02 & 0.956 \\
     $\gamma$-reweight &     83.68\% &     75.85\% &       66.43\% &       9.14e-03 & 0.960 \\
                EXP-Edit &     77.44\% &     72.42\% &       67.14\% &       5.01e-02 & 0.906 \\
                ITS-Edit &     55.11\% &     48.29\% &       41.67\% &       1.43e-01 & 0.804 \\
        MCmark(n=10) &     98.51\% &     97.09\% &       94.57\% &       4.08e-06 & 0.993 \\
       MCmark(n=100) &     95.32\% &      92.2\% &       87.53\% &       5.66e-04 & 0.987 \\
        MCmark(n=20) &     97.82\% &     95.51\% &       92.05\% &       4.75e-05 & 0.994 \\
        MCmark(n=50) &     97.25\% &     95.38\% &       91.66\% &       9.56e-05 & 0.991 \\
     STA-1 &     84.55\% &     73.79\% &        59.4\% &       1.43e-02 & 0.953 \\
            SynthID &     99.03\% &     97.29\% &       94.66\% &       6.22e-06 & 0.995 \\
\bottomrule
\end{tabular}
}}
\vspace{-0.5cm}
\end{table}

\paragraph{Detectability.}\label{sec:exp-detect}
Open-ended generation on \textsc{C4}/\textsc{MMW}/\textsc{Dolly CW}/\textsc{WaterBench}: $1000$ prompts, one generation per prompt. For each method we build matched sets of watermarked and unwatermarked texts (same prompts, decoding settings). We compute:
(i) TPR at target FPR $\{5\%,1\%,0.1\%\}$ using analytic thresholds from each detector’s null;
(ii) Median p-value generated by the detection algorithm;
(iii) AUC on balanced datasets (same number of positive/negative sequences).
Unless otherwise stated, we fix generation lengths around 500 tokens per dataset.

\begin{figure*}[t]
  \centering
  \begin{minipage}[t]{0.495\textwidth}\centering
    \includegraphics[width=\linewidth]{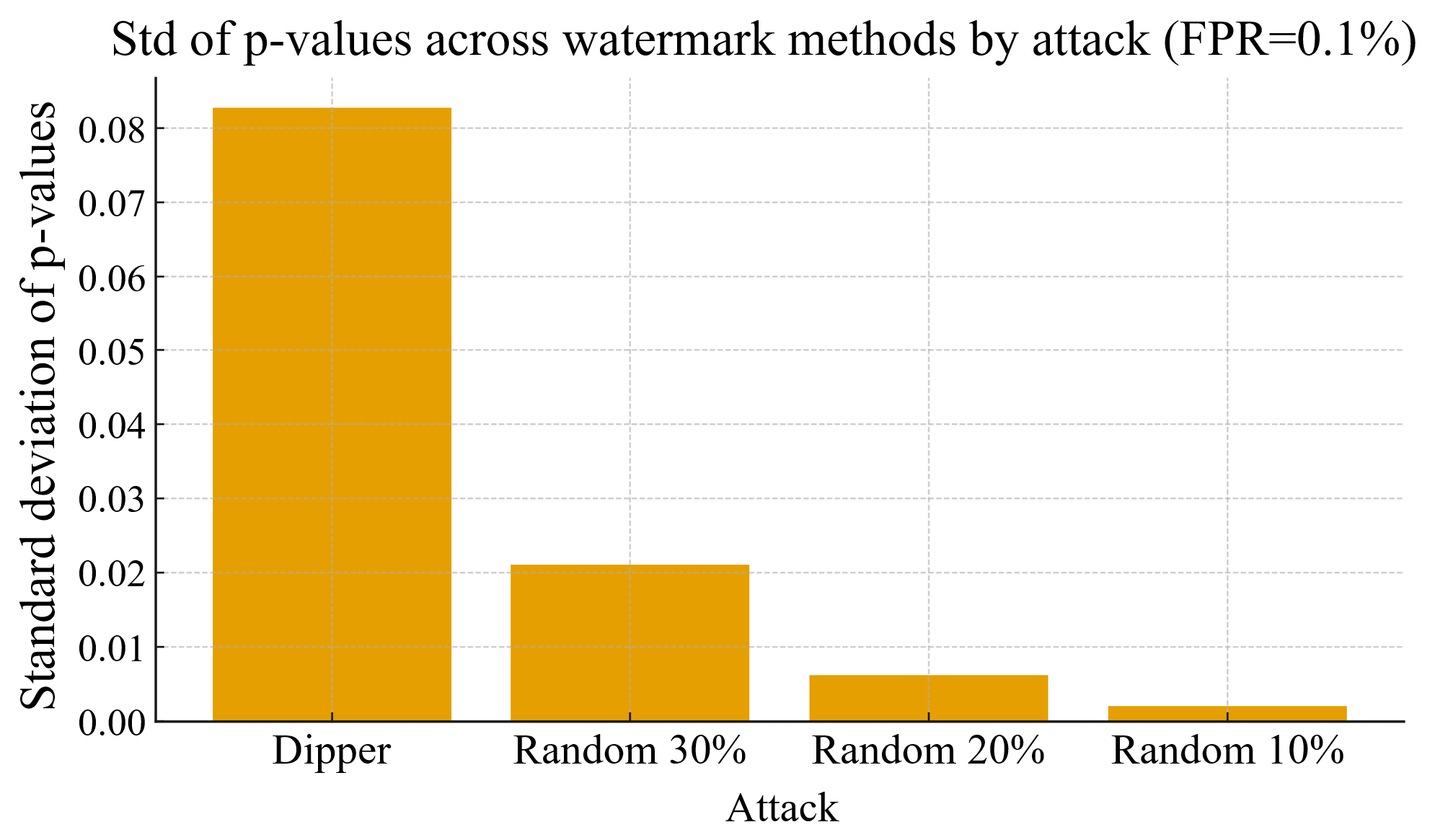}
    
  \end{minipage}\hfill
  \begin{minipage}[t]{0.49\textwidth}\centering
    \includegraphics[width=\linewidth]{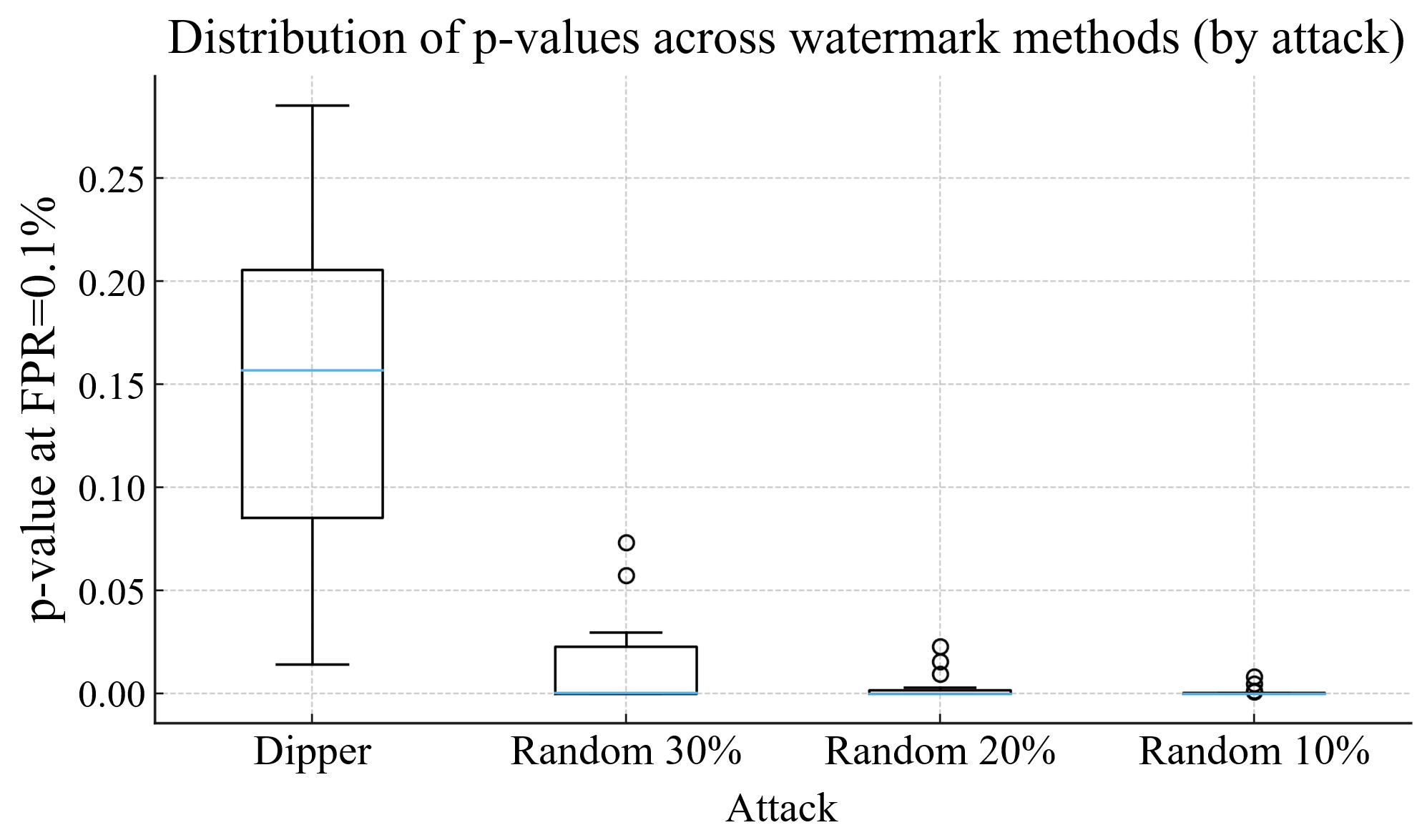}
  \end{minipage}\vspace{-9pt}
  \caption{Variance of p-values across watermarking methods under different attack strategies (FPR=0.1\%). Left: standard deviation of p-values by attack. Right: distribution of p-values (boxplots) across watermark methods.}
  \label{fig:high_var}
  \vspace{-0.5cm}
\end{figure*}

\begin{table}[t]
\centering
\caption{Robustness (TPR@1\%FPR) of watermarking method across different attack types.}
\vspace{5pt}
\label{tab:wm_by_attack_2}
\small{
\begin{tabular}{l|cccc}
\toprule
\textbf{Method} & \textbf{DIPPER} & \textbf{Random 30\%} & \textbf{Random 20\%} & \textbf{Random 10\%} \\
\midrule
KGW($\delta$=0.5)   & 0.42\%  & 4.26\%  & 7.55\%  & 8.19\%  \\
KGW($\delta$=1.0)   & 0.21\%  & 14.90\% & 24.48\% & 38.75\% \\
KGW($\delta$=1.5)   & 1.46\%  & 37.60\% & 56.77\% & 75.10\% \\
KGW($\delta$=2.0)   & 1.35\%  & 62.11\% & 78.32\% & 91.68\% \\
Unigram($\delta$=0.5) & 34.90\% & 47.92\% & 56.67\% & 66.25\% \\
Unigram($\delta$=1.0) & 40.52\% & 64.38\% & 80.52\% & 85.00\% \\
Unigram($\delta$=1.5) & 44.48\% & 80.21\% & 89.16\% & 93.68\% \\
Unigram($\delta$=2.0) & 58.85\% & 95.63\% & 98.02\% & 99.17\% \\ \midrule
DIP($\alpha$=0.3)   & 0.83\%  & 2.66\%  & 7.23\%  & 17.98\% \\
DIP($\alpha$=0.4)   & 0.42\%  & 3.37\%  & 7.37\%  & 21.16\% \\
$\gamma$-reweight   & 0.73\%  & 2.53\%  & 11.47\% & 26.95\% \\
STA($\gamma$=0.5)   & 2.29\%  & 4.90\%  & 12.29\% & 21.56\% \\
SynthID             & 3.02\%  & 7.71\%  & 14.58\% & 26.25\% \\
ITS-Edit                 & 1.15\%  & 2.40\%  & 3.96\%  & 6.04\%  \\
EXP-Edit                 & 0.94\%  & 15.21\% & 21.46\% & 26.98\% \\
MCmark($n$=10)       & 5.10\%  & 39.26\% & 73.37\% & 96.11\% \\
MCmark($n$=20)       & 3.85\%  & 33.85\% & 61.56\% & 86.25\% \\
MCmark($n$=50)       & 3.96\%  & 37.92\% & 63.44\% & 85.63\% \\
MCmark($n$=100)      & 3.02\%  & 30.42\% & 50.52\% & 71.15\% \\
\bottomrule
\end{tabular}%
}
\end{table}

\paragraph{Robustness.}\label{sec:exp-robust}
We evaluate watermark robustness under two categories of paraphrasing-based attacks. DIPPER is a strong neural paraphraser that rewrites text while preserving semantic meaning, thereby introducing substantial variability into the generated outputs. In contrast, Random token replacement attacks directly perturb the text by substituting a fixed percentage of tokens (10\%, 20\%, or 30\%) with randomly sampled alternatives. While random replacements offer a simple, noise-driven baseline for robustness testing, DIPPER provides a more realistic and challenging paraphrasing scenario that better reflects practical adversarial conditions.

\paragraph{Paraphrasing variance.}
Using \textsc{DIPPER} paraphrasing, we generate $r$ paraphrases per input (multiple seeds and temperatures), forming matched sets for each method. We report the mean $\pm$ std of TPR@FPR and AUC across seeds and show per-prompt variance distributions. As shown in Figure~\ref{fig:high_var}, DIPPER exhibits substantially higher variance in p-values compared to random token replacement attacks. The bar plot (left) shows that the standard deviation of p-values under DIPPER is roughly four times higher than under the strongest random attack (30\% token replacement). The boxplot (right) further highlights this instability: DIPPER produces a wide spread of p-values, ranging from very low to relatively high values, while random replacements lead to consistently small p-values with much tighter distributions.

\subsection{A Three-axis Evaluation of Unbiased watermark}

\paragraph{Unbiasedness score.} For each method, we quantify unbiasedness as closeness to the unwatermarked baseline (“None”) across metrics \(m \in \{\text{TS-BERT}, \text{ROUGE-1}, \text{Perplexity}, \text{MT-BERT}, \text{BLEU}\}\). For Config~1, compute the relative deviation \(r^{(1)}_{m}={\lvert x^{\text{method}}_{m,\mathrm{cfg1}}-x^{\text{None}}_{m,\mathrm{cfg1}}\rvert}/{x^{\text{None}}_{m,\mathrm{cfg1}}}\). For Config~2, treat reported values as deltas and remove the baseline noise floor via \(r^{(2)}_{m}={\max\{0,\,\lvert \Delta^{\text{method}}_{m,\mathrm{cfg2}}\rvert-\lvert \Delta^{\text{None}}_{m,\mathrm{cfg2}}\rvert\}}/{x^{\text{None}}_{m,\mathrm{cfg1}}}\). Aggregate \(D_1=\frac{1}{M}\sum_m r^{(1)}_{m}\) and \(D_2=\frac{1}{M}\sum_m r^{(2)}_{m}\), then combine \(D=\lambda D_1+(1-\lambda)D_2\) (default \(\lambda=0.6\)). Finally, map to \([0,100]\) via the \(100(1-D)\) (default \(\alpha=1\)); higher \(U\) indicates greater unbiasedness (i.e., smaller average deviation from baseline and lower small-sample sensitivity).

\textbf{Detectability Score.} Using the averaged detection metrics per method (TPR at FPR \( \in \{5\%,1\%,0.1\%\} \), median \(p\)-value, AUROC), first convert TPR percentages to decimals \(tpr_{5},tpr_{1},tpr_{0.1}\in[0,1]\) and form a low-FPR–weighted operating-point score \(s_{\mathrm{tpr}}=0.2\,tpr_{5}+0.3\,tpr_{1}+0.5\,tpr_{0.1}\). Map median \(p\)-value to a bounded significance score via \(s_{p}=\min\!\big\{1,\;[-\log_{10}(\max(p,10^{-22}))]/22\big\}\), which clips extremely small \(p\) at \(10^{-22}\) and yields \(s_{p}\in[0,1]\). Let \(s_{\mathrm{auc}}=\mathrm{AUROC}\in[0,1]\). The final detectability score is a convex combination
\[
\mathrm{Detect}=100\big(w_{\mathrm{tpr}}\,s_{\mathrm{tpr}}+w_{\mathrm{auc}}\,s_{\mathrm{auc}}+w_{p}\,s_{p}\big),
\]
with default weights \((w_{\mathrm{tpr}},w_{\mathrm{auc}},w_{p})=(0.60,0.25,0.15)\). Higher values indicate stronger detectability, with emphasis on reliable detection at very low FPR while still rewarding overall separability (AUROC) and statistical significance (median \(p\)).

\paragraph{Robustness score.} For each watermarking method \(m\), let \(t_{a,f}(m)\in[0,1]\) denote the true positive rate (TPR, as a decimal) under attack \(a\in\{\text{DIPPER},\text{Random30},\text{Random20},\text{Random10}\}\) at false positive rate \(f\in\{0.1\%,1\%,5\%\}\). We first compute a low-FPR–emphasized per-attack operating score \(s_a(m)=0.5\,t_{a,0.1\%}(m)+0.3\,t_{a,1\%}(m)+0.2\,t_{a,5\%}(m)\). These per-attack scores are then aggregated with reduced weight on DIPPER (reflecting the current study’s priorities) using \((v_{\text{DIPPER}},v_{\text{Random30}},v_{\text{Random20}},v_{\text{Random10}})=(0.2,\,0.4,\,\tfrac{4}{15}\!\approx\!0.2667,\,\tfrac{2}{15}\!\approx\!0.1333)\) to obtain a single robustness value \(R(m)=\sum_a v_a\,s_a(m)\in[0,1]\). The final non-smoothed robustness score reported in our tables is \(\mathrm{RobustnessScore}(m)=100\,R(m)\in[0,100]\); higher values indicate stronger robustness with greater emphasis on performance at very low FPR and under the more challenging random-replacement attacks.

Figure~\ref{fig:high_var} and Table~\ref{tab:method_scores} jointly highlight the trade-offs between unbiasedness, detectability, and robustness across watermarking methods. From the scatter plot, we observe that methods such as MCmark ($n$=10/20) and SynthID occupy the top-right corner, demonstrating strong detectability and unbiasedness, though with limited robustness (small marker size). In contrast, Unigram ($\delta$=2) and KGW ($\delta$=2) achieve considerably higher robustness (large markers) but at the cost of lower unbiasedness and detectability. The tabulated scores further confirm this: Unigram ($\delta$=2) attains the highest robustness (0.855) despite relatively low detectability (0.903), whereas MCmark variants and SynthID provide balanced detectability ($>$0.945) and unbiasedness ($>$0.965) but modest robustness. Notably, DiPmark and STA-1 maintain excellent unbiasedness ($>$0.98) but their detectability lags behind ($<$0.72), highlighting their limitations under strict detection thresholds. Overall, these results underscore the central tension in watermark design: methods that optimize detectability and unbiasedness often sacrifice robustness, whereas highly robust methods compromise on unbiased generation quality or reliable detectability.

\section{Conclusion}\label{sec:conclusion}
We introduced \methodname, an open-source benchmark for the principled evaluation of unbiased watermarking in language models. Our theory establishes a fundamental limitation: any detectable scheme that is unbiased in the one-shot sense cannot preserve the original distribution under repeated queries of the same prompt, motivating our single-prompt multiple-generation (SPMG) metric and a calibrated detection statistic for unbiasedness assessment. Experiments across diverse models and datasets demonstrate standardized, reproducible comparisons along three axes clarifying practical trade-offs and failure modes.

\bibliography{iclr2026_conference}

\begin{thebibliography}{32}
\providecommand{\natexlab}[1]{#1}
\providecommand{\url}[1]{\texttt{#1}}
\expandafter\ifx\csname urlstyle\endcsname\relax
  \providecommand{\doi}[1]{doi: #1}\else
  \providecommand{\doi}{doi: \begingroup \urlstyle{rm}\Url}\fi

\bibitem[Aaronson(2022)]{Aaronson2022}
Scott Aaronson.
\newblock My {AI} safety lecture for {UT} effective altruism,.
\newblock 2022.
\newblock URL \url{https://scottaaronson.blog/?p=6823}.

\bibitem[Abdin et~al.(2024)Abdin, Aneja, Awadalla, Awadallah, Awan, Bach, Bahree, Bakhtiari, Bao, Behl, et~al.]{abdin2024phi}
Marah Abdin, Jyoti Aneja, Hany Awadalla, Ahmed Awadallah, Ammar~Ahmad Awan, Nguyen Bach, Amit Bahree, Arash Bakhtiari, Jianmin Bao, Harkirat Behl, et~al.
\newblock Phi-3 technical report: A highly capable language model locally on your phone.
\newblock \emph{arXiv preprint arXiv:2404.14219}, 2024.

\bibitem[Bojar et~al.(2016)Bojar, Chatterjee, Federmann, Graham, Haddow, Huck, Jimeno~Yepes, Koehn, Logacheva, Monz, Negri, Neveol, Neves, Popel, Post, Rubino, Scarton, Specia, Turchi, Verspoor, and Zampieri]{bojar-EtAl:2016:WMT1}
Ond~{r}ej Bojar, Rajen Chatterjee, Christian Federmann, Yvette Graham, Barry Haddow, Matthias Huck, Antonio Jimeno~Yepes, Philipp Koehn, Varvara Logacheva, Christof Monz, Matteo Negri, Aurelie Neveol, Mariana Neves, Martin Popel, Matt Post, Raphael Rubino, Carolina Scarton, Lucia Specia, Marco Turchi, Karin Verspoor, and Marcos Zampieri.
\newblock Findings of the 2016 conference on machine translation.
\newblock In \emph{Proceedings of the First Conference on Machine Translation}, pp.\  131--198, Berlin, Germany, August 2016. Association for Computational Linguistics.
\newblock URL \url{http://www.aclweb.org/anthology/W/W16/W16-2301}.

\bibitem[Chen et~al.(2024{\natexlab{a}})Chen, Wu, Guo, and Huang]{chen2024mark}
Ruibo Chen, Yihan Wu, Junfeng Guo, and Heng Huang.
\newblock De-mark: Watermark removal in large language models.
\newblock \emph{arXiv preprint arXiv:2410.13808}, 2024{\natexlab{a}}.

\bibitem[Chen et~al.(2025)Chen, Wu, Guo, and Huang]{chen2025improved}
Ruibo Chen, Yihan Wu, Junfeng Guo, and Heng Huang.
\newblock Improved unbiased watermark for large language models.
\newblock \emph{arXiv preprint arXiv:2502.11268}, 2025.

\bibitem[Chen et~al.(2024{\natexlab{b}})Chen, Hu, Wu, Chen, Jin, Chen, and Huang]{chen2024enhancing}
Yanshuo Chen, Zhengmian Hu, Yihan Wu, Ruibo Chen, Yongrui Jin, Wei Chen, and Heng Huang.
\newblock Enhancing biosecurity with watermarked protein design.
\newblock \emph{bioRxiv}, pp.\  2024--05, 2024{\natexlab{b}}.

\bibitem[Christ et~al.(2023)Christ, Gunn, and Zamir]{christ2023undetectable}
Miranda Christ, Sam Gunn, and Or~Zamir.
\newblock Undetectable watermarks for language models.
\newblock \emph{arXiv preprint arXiv:2306.09194}, 2023.

\bibitem[Conover et~al.(2023)Conover, Hayes, Mathur, Xie, Wan, Shah, Ghodsi, Wendell, Zaharia, and Xin]{DatabricksBlog2023DollyV2}
Mike Conover, Matt Hayes, Ankit Mathur, Jianwei Xie, Jun Wan, Sam Shah, Ali Ghodsi, Patrick Wendell, Matei Zaharia, and Reynold Xin.
\newblock Free dolly: Introducing the world's first truly open instruction-tuned llm, 2023.
\newblock URL \url{https://www.databricks.com/blog/2023/04/12/dolly-first-open-commercially-viable-instruction-tuned-llm}.

\bibitem[Dathathri et~al.(2024)Dathathri, See, Ghaisas, Huang, McAdam, Welbl, Bachani, Kaskasoli, Stanforth, Matejovicova, et~al.]{dathathri2024scalable}
Sumanth Dathathri, Abigail See, Sumedh Ghaisas, Po-Sen Huang, Rob McAdam, Johannes Welbl, Vandana Bachani, Alex Kaskasoli, Robert Stanforth, Tatiana Matejovicova, et~al.
\newblock Scalable watermarking for identifying large language model outputs.
\newblock \emph{Nature}, 634\penalty0 (8035):\penalty0 818--823, 2024.

\bibitem[Dubey et~al.(2024)Dubey, Jauhri, Pandey, Kadian, Al-Dahle, Letman, Mathur, Schelten, Yang, Fan, et~al.]{dubey2024llama}
Abhimanyu Dubey, Abhinav Jauhri, Abhinav Pandey, Abhishek Kadian, Ahmad Al-Dahle, Aiesha Letman, Akhil Mathur, Alan Schelten, Amy Yang, Angela Fan, et~al.
\newblock The llama 3 herd of models.
\newblock \emph{arXiv preprint arXiv:2407.21783}, 2024.

\bibitem[He et~al.(2024)He, Zhou, Hao, Liu, Wang, Tu, Zhang, and Wang]{he2024can}
Zhiwei He, Binglin Zhou, Hongkun Hao, Aiwei Liu, Xing Wang, Zhaopeng Tu, Zhuosheng Zhang, and Rui Wang.
\newblock Can watermarks survive translation? on the cross-lingual consistency of text watermark for large language models.
\newblock \emph{arXiv preprint arXiv:2402.14007}, 2024.

\bibitem[Hu et~al.(2023)Hu, Chen, Wu, Wu, Zhang, and Huang]{hu2023unbiased}
Zhengmian Hu, Lichang Chen, Xidong Wu, Yihan Wu, Hongyang Zhang, and Heng Huang.
\newblock Unbiased watermark for large language models.
\newblock \emph{arXiv preprint arXiv:2310.10669}, 2023.

\bibitem[Jiang et~al.(2023)Jiang, Sablayrolles, Mensch, Bamford, Chaplot, Casas, Bressand, Lengyel, Lample, Saulnier, et~al.]{jiang2023mistral}
Albert~Q Jiang, Alexandre Sablayrolles, Arthur Mensch, Chris Bamford, Devendra~Singh Chaplot, Diego de~las Casas, Florian Bressand, Gianna Lengyel, Guillaume Lample, Lucile Saulnier, et~al.
\newblock Mistral 7b.
\newblock \emph{arXiv preprint arXiv:2310.06825}, 2023.

\bibitem[Kirchenbauer et~al.(2023{\natexlab{a}})Kirchenbauer, Geiping, Wen, Katz, Miers, and Goldstein]{kirchenbauer2023watermark}
John Kirchenbauer, Jonas Geiping, Yuxin Wen, Jonathan Katz, Ian Miers, and Tom Goldstein.
\newblock A watermark for large language models.
\newblock \emph{arXiv preprint arXiv:2301.10226}, 2023{\natexlab{a}}.

\bibitem[Kirchenbauer et~al.(2023{\natexlab{b}})Kirchenbauer, Geiping, Wen, Shu, Saifullah, Kong, Fernando, Saha, Goldblum, and Goldstein]{kirchenbauer2023reliability}
John Kirchenbauer, Jonas Geiping, Yuxin Wen, Manli Shu, Khalid Saifullah, Kezhi Kong, Kasun Fernando, Aniruddha Saha, Micah Goldblum, and Tom Goldstein.
\newblock On the reliability of watermarks for large language models.
\newblock \emph{arXiv preprint arXiv:2306.04634}, 2023{\natexlab{b}}.

\bibitem[Kuditipudi et~al.(2023)Kuditipudi, Thickstun, Hashimoto, and Liang]{kuditipudi2023robust}
Rohith Kuditipudi, John Thickstun, Tatsunori Hashimoto, and Percy Liang.
\newblock Robust distortion-free watermarks for language models.
\newblock \emph{arXiv preprint arXiv:2307.15593}, 2023.

\bibitem[Lewis(2019)]{lewis2019bart}
Mike Lewis.
\newblock Bart: Denoising sequence-to-sequence pre-training for natural language generation, translation, and comprehension.
\newblock \emph{arXiv preprint arXiv:1910.13461}, 2019.

\bibitem[Lin(2004)]{lin2004rouge}
Chin-Yew Lin.
\newblock Rouge: A package for automatic evaluation of summaries.
\newblock In \emph{Text summarization branches out}, pp.\  74--81, 2004.

\bibitem[Liu et~al.(2023{\natexlab{a}})Liu, Pan, Hu, Li, Wen, King, and Yu]{liu2023unforgeable}
Aiwei Liu, Leyi Pan, Xuming Hu, Shu’ang Li, Lijie Wen, Irwin King, and Philip~S Yu.
\newblock An unforgeable publicly verifiable watermark for large language models.
\newblock \emph{arXiv preprint arXiv:2307.16230}, 2023{\natexlab{a}}.

\bibitem[Liu et~al.(2023{\natexlab{b}})Liu, Pan, Hu, Meng, and Wen]{liu2023semantic}
Aiwei Liu, Leyi Pan, Xuming Hu, Shiao Meng, and Lijie Wen.
\newblock A semantic invariant robust watermark for large language models.
\newblock \emph{arXiv preprint arXiv:2310.06356}, 2023{\natexlab{b}}.

\bibitem[Liu et~al.(2020)Liu, Gu, Goyal, Li, Edunov, Ghazvininejad, Lewis, and Zettlemoyer]{liu2020multilingual}
Yinhan Liu, Jiatao Gu, Naman Goyal, Xian Li, Sergey Edunov, Marjan Ghazvininejad, Mike Lewis, and Luke Zettlemoyer.
\newblock Multilingual denoising pre-training for neural machine translation.
\newblock \emph{Transactions of the Association for Computational Linguistics}, 8:\penalty0 726--742, 2020.

\bibitem[Mao et~al.(2024)Mao, Wei, Chen, Fang, and Chau]{mao2024watermark}
Minjia Mao, Dongjun Wei, Zeyu Chen, Xiao Fang, and Michael Chau.
\newblock A watermark for low-entropy and unbiased generation in large language models.
\newblock \emph{arXiv preprint arXiv:2405.14604}, 2024.

\bibitem[Pan et~al.(2024)Pan, Liu, He, Gao, Zhao, Lu, Zhou, Liu, Hu, Wen, et~al.]{pan2024markllm}
Leyi Pan, Aiwei Liu, Zhiwei He, Zitian Gao, Xuandong Zhao, Yijian Lu, Binglin Zhou, Shuliang Liu, Xuming Hu, Lijie Wen, et~al.
\newblock Markllm: An open-source toolkit for llm watermarking.
\newblock \emph{arXiv preprint arXiv:2405.10051}, 2024.

\bibitem[Papineni et~al.(2002)Papineni, Roukos, Ward, and Zhu]{papineni2002bleu}
Kishore Papineni, Salim Roukos, Todd Ward, and Wei-Jing Zhu.
\newblock Bleu: a method for automatic evaluation of machine translation.
\newblock In \emph{Proceedings of the 40th annual meeting of the Association for Computational Linguistics}, pp.\  311--318, 2002.

\bibitem[Piet et~al.(2023)Piet, Sitawarin, Fang, Mu, and Wagner]{piet2023mark}
Julien Piet, Chawin Sitawarin, Vivian Fang, Norman Mu, and David Wagner.
\newblock Mark my words: Analyzing and evaluating language model watermarks.
\newblock \emph{arXiv preprint arXiv:2312.00273}, 2023.

\bibitem[Piet et~al.(2025)Piet, Sitawarin, Fang, Mu, and Wagner]{piet2025markmywords}
Julien Piet, Chawin Sitawarin, Vivian Fang, Norman Mu, and David Wagner.
\newblock Markmywords: Analyzing and evaluating language model watermarks.
\newblock In \emph{2025 IEEE Conference on Secure and Trustworthy Machine Learning (SaTML)}, pp.\  68--91. IEEE, 2025.

\bibitem[Raffel et~al.(2020)Raffel, Shazeer, Roberts, Lee, Narang, Matena, Zhou, Li, and Liu]{raffel2020exploring}
Colin Raffel, Noam Shazeer, Adam Roberts, Katherine Lee, Sharan Narang, Michael Matena, Yanqi Zhou, Wei Li, and Peter~J Liu.
\newblock Exploring the limits of transfer learning with a unified text-to-text transformer.
\newblock \emph{Journal of machine learning research}, 21\penalty0 (140):\penalty0 1--67, 2020.

\bibitem[See et~al.(2017)See, Liu, and Manning]{see-etal-2017-get}
Abigail See, Peter~J. Liu, and Christopher~D. Manning.
\newblock Get to the point: Summarization with pointer-generator networks.
\newblock In \emph{Proceedings of the 55th Annual Meeting of the Association for Computational Linguistics (Volume 1: Long Papers)}, pp.\  1073--1083, Vancouver, Canada, July 2017. Association for Computational Linguistics.
\newblock \doi{10.18653/v1/P17-1099}.
\newblock URL \url{https://www.aclweb.org/anthology/P17-1099}.

\bibitem[Tu et~al.(2023)Tu, Sun, Bai, Yu, Hou, and Li]{tu2023waterbench}
Shangqing Tu, Yuliang Sun, Yushi Bai, Jifan Yu, Lei Hou, and Juanzi Li.
\newblock Waterbench: Towards holistic evaluation of watermarks for large language models.
\newblock \emph{arXiv preprint arXiv:2311.07138}, 2023.

\bibitem[Wu et~al.(2023)Wu, Hu, Zhang, and Huang]{wu2023dipmark}
Yihan Wu, Zhengmian Hu, Hongyang Zhang, and Heng Huang.
\newblock Dipmark: A stealthy, efficient and resilient watermark for large language models.
\newblock \emph{arXiv preprint arXiv:2310.07710}, 2023.

\bibitem[Zhang et~al.(2019)Zhang, Kishore, Wu, Weinberger, and Artzi]{zhang2019bertscore}
Tianyi Zhang, Varsha Kishore, Felix Wu, Kilian~Q Weinberger, and Yoav Artzi.
\newblock Bertscore: Evaluating text generation with bert.
\newblock \emph{arXiv preprint arXiv:1904.09675}, 2019.

\bibitem[Zhao et~al.(2023)Zhao, Ananth, Li, and Wang]{zhao2023provable}
Xuandong Zhao, Prabhanjan Ananth, Lei Li, and Yu-Xiang Wang.
\newblock Provable robust watermarking for ai-generated text.
\newblock \emph{arXiv preprint arXiv:2306.17439}, 2023.

\end{thebibliography}
\bibliographystyle{iclr2026_conference}

\appendix
\clearpage
\newpage

\appendix
\section{LLM Usage}
We ONLY used ChatGPT-4o and ChatGPT-5 to refine the content.
\section{Missing Proofs}\label{sec:missing proof}

\subsection{Proof of Theorem~\ref{thm:repeated}}
\paragraph{Setup.}
Fix a prompt $\bm{x}$. Let $P_M(\cdot \mid \bm{x})$ denote the LM’s distribution over full generations (sequences or token paths).
A watermarking scheme consists of a \emph{reweight strategy} $F$ and a watermark key $k \in K$, producing a watermarked distribution
\[
P_W(\cdot \mid \bm{x},k)=F\!\big(P_M(\cdot \mid \bm{x}),k\big).
\]
We say the scheme is \emph{unbiased} (distribution-preserving in expectation over keys) if
\begin{equation}\label{eq:unbiased}
\mathbb{E}_{k \sim \mu}\!\left[P_W(\cdot \mid \bm{x},k)\right] \;=\; P_M(\cdot \mid \bm{x}),
\end{equation}
where $\mu$ is the key distribution. Detectability means there exists a statistical test that, for some keys $k$, can distinguish samples from $P_W(\cdot \mid \bm{x},k)$ versus $P_M(\cdot \mid \bm{x})$ with nontrivial power.

\paragraph{Repeated-prompt model.}
Consider $m$ independent generations of the \emph{same} prompt $\bm{x}$ under a \emph{fixed} key $k$:
\[
P_W^{(m)}(\cdot \mid \bm{x},k)\;:=\;\big(P_W(\cdot \mid \bm{x},k)\big)^{\otimes m}, 
\qquad
P_M^{(m)}(\cdot \mid \bm{x})\;:=\;\big(P_M(\cdot \mid \bm{x})\big)^{\otimes m}.
\]

\begin{lemma}[Detectability $\Rightarrow$ key-level deviation]\label{lem:key-deviation}
If a watermarking scheme is detectable, then there exists a measurable set $A \subseteq K$ with $\mu(A)>0$ such that
$P_W(\cdot \mid \bm{x},k) \neq P_M(\cdot \mid \bm{x})$ for all $k \in A$.
\end{lemma}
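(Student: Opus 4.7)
The plan is to prove the contrapositive: if the set of ``deviating keys'' has $\mu$-measure zero, then the scheme fails to be detectable. Let $A:=\{k\in K:\; P_W(\cdot\mid\bm{x},k)\neq P_M(\cdot\mid\bm{x})\}$ and assume $\mu(A)=0$. Then for $\mu$-almost every $k$ we have $P_W(\cdot\mid\bm{x},k)=P_M(\cdot\mid\bm{x})$ as distributions over generations, which we will lift to an indistinguishability statement at the level of joint laws over (text,\,key) pairs.

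First I would make precise the implicit notion of detectability. Any detector has access to the key $k$ (as stated in the watermark model), so it can be formalized as a measurable test $T:\mathcal{Y}\times K\to\{0,1\}$, and detectability means there is some test $T$ and some measurable set $B\subseteq K$ with $\mu(B)>0$ on which the true positive rate strictly exceeds the false positive rate, i.e.\ $\Pr_{y\sim P_W(\cdot\mid\bm{x},k)}[T(y,k)=1] > \Pr_{y\sim P_M(\cdot\mid\bm{x})}[T(y,k)=1]$ for $k\in B$. Under the assumption $\mu(A)=0$, we have $B\cap A^c$ of full $\mu$-measure inside $B$, and on this set the two conditional distributions of $y$ coincide exactly, so the two probabilities above are equal for $\mu$-a.e.\ $k\in B$. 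Integrating against $\mu$ then gives equality of the aggregate true-positive and false-positive rates, contradicting nontrivial detection.

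To package this more cleanly I would work with the joint distributions $\mathbb{Q}_W(dy,dk):=P_W(dy\mid\bm{x},k)\,\mu(dk)$ and $\mathbb{Q}_M(dy,dk):=P_M(dy\mid\bm{x})\,\mu(dk)$. The hypothesis $\mu(A)=0$ yields $\mathbb{Q}_W=\mathbb{Q}_M$ (they agree on every measurable rectangle because $P_W(\cdot\mid\bm{x},k)=P_M(\cdot\mid\bm{x})$ for $\mu$-a.e.\ $k$, and rectangles generate the product $\sigma$-algebra). Hence $\mathbb{E}_{\mathbb{Q}_W}[T]=\mathbb{E}_{\mathbb{Q}_M}[T]$ for every measurable $T$, so the total variation distance between $\mathbb{Q}_W$ and $\mathbb{Q}_M$ is zero and no test can achieve nonzero advantage.

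The main obstacle will be nailing down the right formalism for ``detectability,'' since the statement in the excerpt is informal and one must decide whether the detector observes the key and whether ``nontrivial power'' is asserted uniformly in $k$, on a positive-measure set of $k$, or only on average. I would adopt the positive-measure formulation above, since it is the weakest natural notion that still gives a meaningful test in the random-key model; any stronger notion only makes the contrapositive easier. A minor technical point is measurability of $A$, which follows because $k\mapsto \|P_W(\cdot\mid\bm{x},k)-P_M(\cdot\mid\bm{x})\|_{\mathrm{TV}}$ is measurable under standard regularity of the reweight map $F$, so $A$ is the preimage of $(0,\infty)$.
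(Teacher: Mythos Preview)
Your proposal is correct and follows the same contrapositive route as the paper: assume $P_W(\cdot\mid\bm{x},k)=P_M(\cdot\mid\bm{x})$ for $\mu$-almost every $k$ and conclude that no test can achieve nontrivial power. The paper's proof is a terse two-sentence version of your argument (it additionally remarks that the equality lifts to product measures $P_W^{(m)}=P_M^{(m)}$, anticipating the next lemma), whereas you supply the formalization of ``detectability'' and the measurability of $A$ that the paper leaves implicit.
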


\begin{proof}
If $P_W(\cdot \mid \bm{x},k)=P_M(\cdot \mid \bm{x})$ for $\mu$-almost every $k$, then for any sample size $m$ the product measures also coincide, $P_W^{(m)}(\cdot \mid \bm{x},k)=P_M^{(m)}(\cdot \mid \bm{x})$, rendering any detector powerless (no test can outperform random guessing). Thus detectability implies a positive-measure subset of keys for which the two distributions differ.
\end{proof}

\begin{lemma}[Separation amplifies under products]\label{lem:amplification}
Let $P \neq Q$ be two distributions on a common measurable space. Denote their Bhattacharyya coefficient by 
$\mathrm{BC}(P,Q)=\int \sqrt{dP\,dQ}\in(0,1)$. Then for product measures,
\[
\mathrm{BC}\big(P^{\otimes m},Q^{\otimes m}\big)\;=\;\big(\mathrm{BC}(P,Q)\big)^m \xrightarrow[m\to\infty]{} 0,
\]
and consequently the total variation distance satisfies
\[
\mathrm{TV}\big(P^{\otimes m},Q^{\otimes m}\big)\;\ge\;1-\big(\mathrm{BC}(P,Q)\big)^m \xrightarrow[m\to\infty]{} 1.
\]
\end{lemma}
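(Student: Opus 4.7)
}
The plan is to decompose the claim into three independent pieces: (i) multiplicativity of the Bhattacharyya coefficient under tensor products, (ii) the lower bound $\mathrm{TV}(P,Q)\ge 1-\mathrm{BC}(P,Q)$, and (iii) combining these with the fact that $\mathrm{BC}(P,Q)<1$ whenever $P\neq Q$. I will pick a common dominating measure $\nu$ (e.g.\ $\nu=P+Q$) so that densities $p=dP/d\nu$ and $q=dQ/d\nu$ are well defined; all integrals below are with respect to $\nu$.

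For step (i), I note that $P^{\otimes m}$ and $Q^{\otimes m}$ are dominated by $\nu^{\otimes m}$ with densities $\prod_{i=1}^m p(x_i)$ and $\prod_{i=1}^m q(x_i)$. Then
\[
\mathrm{BC}\!\big(P^{\otimes m},Q^{\otimes m}\big)
=\int \prod_{i=1}^m \sqrt{p(x_i)\,q(x_i)}\,d\nu^{\otimes m}(x_{1:m})
=\prod_{i=1}^m \int \sqrt{p(x)q(x)}\,d\nu(x)
=\mathrm{BC}(P,Q)^m,
\]
where the middle equality is Tonelli's theorem applied to the nonnegative integrand. Since $P\neq Q$ implies $\mathrm{BC}(P,Q)\in(0,1)$ (strict inequality being the Cauchy--Schwarz case of equality only when $p=q$ $\nu$-a.e.), the product tends to $0$ as $m\to\infty$.

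For step (ii), I will use the min-representation of total variation. Writing $\mathrm{TV}(P,Q)=1-\int \min(p,q)\,d\nu$ and applying the elementary pointwise bound $\min(a,b)\le\sqrt{ab}$ for $a,b\ge 0$, I obtain
\[
\int \min(p,q)\,d\nu \;\le\; \int \sqrt{pq}\,d\nu \;=\; \mathrm{BC}(P,Q),
\]
hence $\mathrm{TV}(P,Q)\ge 1-\mathrm{BC}(P,Q)$. Applying this inequality to the product measures $P^{\otimes m}$ and $Q^{\otimes m}$ and substituting the identity from step (i) yields
\[
\mathrm{TV}\!\big(P^{\otimes m},Q^{\otimes m}\big)\;\ge\;1-\mathrm{BC}(P,Q)^m,
\]
which converges to $1$ as $m\to\infty$ by step (i).

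There is no real obstacle here; all pieces are classical. The only point requiring a little care is to ensure the tensorization step uses Tonelli (not Fubini) so that measurability and integrability are automatic from nonnegativity, and to record that $\mathrm{BC}(P,Q)<1$ strictly whenever $P\neq Q$, so that the geometric decay $\mathrm{BC}(P,Q)^m\to 0$ is genuine. Both points are one-liners, so the proof should be entirely self-contained in under half a page.
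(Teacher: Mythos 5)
Your proposal is correct and follows essentially the same route as the paper's proof: multiplicativity of the Bhattacharyya coefficient over independent products combined with the inequality $1-\mathrm{TV}(P,Q)\le \mathrm{BC}(P,Q)$, with $\mathrm{BC}(P,Q)<1$ for $P\neq Q$ giving the limit. The only difference is that you spell out the tensorization (via Tonelli) and the bound $\min(p,q)\le\sqrt{pq}$ explicitly, which the paper simply cites as standard facts.
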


\begin{proof}
Bhattacharyya coefficients multiply under independent products. Using the inequality $1-\mathrm{TV}(P,Q)\le \mathrm{BC}(P,Q)$ yields the stated lower bound on $\mathrm{TV}$; since $\mathrm{BC}(P,Q)<1$ when $P\neq Q$, the bound tends to $1$ as $m\to\infty$.
\end{proof}

By Lemma~\ref{lem:key-deviation}, detectability implies the existence of a positive-measure set $A$ of keys with $P_W(\cdot \mid \bm{x},k)\neq P_M(\cdot \mid \bm{x})$.
Fix any such $k\in A$ and apply Lemma~\ref{lem:amplification} with $P=P_W(\cdot \mid \bm{x},k)$ and $Q=P_M(\cdot \mid \bm{x})$. Then
\[
\mathrm{TV}\!\left(P_W^{(m)}(\cdot \mid \bm{x},k),\,P_M^{(m)}(\cdot \mid \bm{x})\right)\;\xrightarrow[m\to\infty]{}\;1,
\]
so the product distributions diverge and become perfectly distinguishable as $m$ grows. Therefore, under repeated queries with a fixed key, the watermarked joint law cannot equal the LM’s joint law; i.e., the scheme cannot preserve the original distribution under repeated prompts. This contradicts simultaneous satisfaction of (1)--(2) with detectability.

\subsection{Proof of Theorem~\ref{thm:wmk-detection}}
\begin{proof}
Let $f$ map all $3nm$ sampled generations to $\mathrm{DetWmk}=\Delta(P_M,P_T)-\Delta(P_M,P_{M'})$.
Changing a single generation for prompt $i$ alters the corresponding per-prompt mean by at most $2A/m$, and since $x\mapsto |x-a|$ is 1-Lipschitz, the induced change on a \(\Delta\) term is at most $(2A)/(mn)$.

\emph{Bounded differences:}
\begin{itemize}
\item One $P_M$ sample affects both $\Delta(P_M,P_T)$ and $\Delta(P_M,P_{M'})$ by at most $(2A)/(mn)$ each, hence $|\,\Delta f\,|\le (4A)/(mn)$.
\item One $P_T$ sample affects only $\Delta(P_M,P_T)$: $|\,\Delta f\,|\le (2A)/(mn)$.
\item One $P_{M'}$ sample affects only $\Delta(P_M,P_{M'})$: $|\,\Delta f\,|\le (2A)/(mn)$.
\end{itemize}
Summing squared Lipschitz constants over all variables gives
\[
\sum_k c_k^2
= nm\!\left(\frac{4A}{mn}\right)^{\!2}
 + nm\!\left(\frac{2A}{mn}\right)^{\!2}
 + nm\!\left(\frac{2A}{mn}\right)^{\!2}
= \frac{24A^2}{mn}.
\]
By McDiarmid’s inequality,
\[
\Pr\big(f-\mathbb{E}f\ge t\big)
\le \exp\!\Big(-\frac{2t^2}{\sum_k c_k^2}\Big)
= \exp\!\Big(-\frac{mn\,t^2}{12A^2}\Big),
\]
and the two-sided version follows by symmetry. 

\end{proof}

\section{Detailed Experiment Setup}\label{sec:detailed_experiment_setup}

\subsection{Experiment Setup}

\paragraph{Models \& Datasets.}
We evaluate on \textsc{Llama-3.2-3B-Instruct}~\citep{dubey2024llama}, \textsc{Mistral-7B-Instruct-v0.3}~\citep{jiang2023mistral}, and \textsc{Phi-3.5-mini-instruct}~\citep{abdin2024phi} for open-ended text generation following prior work~\citep{kirchenbauer2023watermark,hu2023unbiased}. Our primary corpus is a standard subset of \textsc{C4}~\citep{raffel2020exploring}; we additionally report on three \textsc{MMW} datasets~\citep{piet2023mark}, \textsc{Dolly CW}~\citep{DatabricksBlog2023DollyV2}, and two \textsc{WaterBench} tasks~\citep{tu2023waterbench}.
For one-shot unbiasedness validation, we follow \citet{hu2023unbiased,wu2023dipmark} using \textsc{MBART}~\citep{liu2020multilingual} on \textsc{WMT16 ro-en}~\citep{bojar-EtAl:2016:WMT1} (machine translation) and \textsc{BART}~\citep{lewis2019bart} on \textsc{CNN/DailyMail}~\citep{see-etal-2017-get} (summarization).

\paragraph{Watermarking setup.}
Unless noted, watermark keys combine a \emph{secret key} with a \emph{prefix 2-gram} context key. Hyperparameters follow the original papers: KGW $\delta\in{0.5,1.0,1.5,2.0}$, Unigram $\delta\in{0.5,1.0,1.5,2.0}$, DiPmark $\alpha\in\{0.3,0.4\}$; SynthID tournament layers $m=20$; MCmark list length $l\in{10,20,50,100}$; $\gamma$-reweight as in \citet{hu2023unbiased}. We report \textbf{TPR@FPR} at theoretically guaranteed FPR levels $\{5\%,1\%,0.1\%\}$ and \textbf{Median $p$-value}. Unless specified, decoding settings and prompt sets are identical across methods.

\textbf{Evaluation Metrics for Text Quality.} 
We employ the following metrics to assess the quality of generated text:

\begin{itemize}
    \item \textbf{ROUGE.} For summarization tasks, we use the ROUGE metric \citep{lin2004rouge}, which measures n-gram overlap between generated summaries and reference texts, thereby capturing how well the output conveys the essential content. 
    \item \textbf{BLEU.} For machine translation, we adopt the BLEU score \citep{papineni2002bleu}, which evaluates lexical similarity between system-generated translations and human references. 
    \item \textbf{BERTScore.} BERTScore \citep{zhang2019bertscore} computes sentence similarity by aggregating cosine similarities between contextualized token embeddings. We report BERTScore-F1, -Precision, and -Recall for both summarization and translation tasks. 
    \item \textbf{Perplexity.} Perplexity, a standard measure from information theory, quantifies how well a probabilistic model predicts observed text. Lower values indicate more accurate predictive performance. We use perplexity to evaluate both summarization and open-ended text generation. 
\end{itemize}

\section{Additional Results}
\begin{table}[h]
\centering
\caption{Robustness (TPR@0.1\%FPR) of watermarking method across different attack types.}
\label{tab:wm_by_attack_3}
\tiny{
\resizebox{0.9\textwidth}{!}{%
\begin{tabular}{lcccc}
\toprule
\textbf{Method} & \textbf{DIPPER} & \textbf{Random 30\%} & \textbf{Random 20\%} & \textbf{Random 10\%} \\
\midrule
KGW($\delta$=0.5)     & 0.00\% & 2.23\%  & 2.45\%  & 4.26\%  \\
KGW($\delta$=1.0)     & 0.00\% & 4.69\%  & 9.69\%  & 18.75\% \\
KGW($\delta$=1.5)     & 0.31\% & 17.29\% & 35.00\% & 50.94\% \\
KGW($\delta$=2.0)     & 0.00\% & 40.21\% & 62.21\% & 80.53\% \\
Unigram($\delta$=0.5) & 17.71\% & 25.21\% & 36.04\% & 43.65\% \\
Unigram($\delta$=1.0) & 21.35\% & 41.88\% & 55.52\% & 69.06\% \\
Unigram($\delta$=1.5) & 23.23\% & 61.58\% & 75.26\% & 85.16\% \\
Unigram($\delta$=2.0) & 36.46\% & 86.88\% & 93.23\% & 95.83\% \\
DIP($\alpha$=0.3)     & 0.10\% & 0.96\%  & 1.70\%  & 6.28\%  \\
DIP($\alpha$=0.4)     & 0.10\% & 0.74\%  & 3.79\%  & 8.21\%  \\
$\gamma$-reweight     & 0.10\% & 1.58\%  & 5.58\%  & 13.58\% \\
STA($\gamma$=0.5)     & 0.42\% & 0.73\%  & 2.71\%  & 8.85\%  \\
SynthID               & 0.52\% & 2.60\%  & 4.79\%  & 9.90\%  \\
ITS-Edit                   & 0.00\% & 0.52\%  & 1.67\%  & 3.75\%  \\
EXP-Edit                   & 0.31\% & 8.96\%  & 15.10\% & 19.17\% \\
MCmark($n$=10)         & 0.52\% & 13.89\% & 46.32\% & 84.42\% \\
MCmark($n$=20)         & 1.15\% & 18.44\% & 43.02\% & 70.94\% \\
MCmark($n$=50)         & 0.94\% & 20.52\% & 44.79\% & 71.25\% \\
MCmark($n$=100)        & 0.73\% & 17.29\% & 34.06\% & 56.77\% \\
\bottomrule
\end{tabular}%
}}
\end{table}

\begin{table}[h]
\centering
\caption{Robustness (TPR@5\%FPR) of watermarking method across different attack types.}
\label{tab:wm_by_attack}
\tiny{
\resizebox{0.9\textwidth}{!}{%
\begin{tabular}{lcccc}
\toprule
\textbf{Method} & \textbf{DIPPER} & \textbf{Random 30\%} & \textbf{Random 20\%} & \textbf{Random 10\%} \\
\midrule
KGW($\delta$=0.5)   & 1.67\%  & 14.89\% & 19.79\% & 21.60\% \\
KGW($\delta$=1.0)   & 2.08\%  & 30.73\% & 47.81\% & 61.67\% \\
KGW($\delta$=1.5)   & 4.38\%  & 58.13\% & 77.71\% & 89.17\% \\
KGW($\delta$=2.0)   & 5.63\%  & 80.74\% & 91.16\% & 96.00\% \\
Unigram($\delta$=0.5) & 53.75\% & 67.81\% & 78.85\% & 83.96\% \\
Unigram($\delta$=1.0) & 60.83\% & 84.17\% & 92.19\% & 91.46\% \\
Unigram($\delta$=1.5) & 63.33\% & 92.84\% & 94.32\% & 97.37\% \\
Unigram($\delta$=2.0) & 74.79\% & 98.23\% & 99.48\% & 100.00\% \\ \midrule
DIP($\alpha$=0.3)   & 2.29\%  & 7.02\%  & 16.91\% & 36.91\% \\
DIP($\alpha$=0.4)   & 1.77\%  & 7.79\%  & 18.21\% & 39.16\% \\
$\gamma$-reweight   & 1.88\%  & 11.37\% & 23.68\% & 46.84\% \\
STA($\gamma$=0.5)   & 6.88\%  & 15.52\% & 28.23\% & 45.63\% \\
SynthID             & 9.90\%  & 23.13\% & 32.60\% & 49.90\% \\
ITS-Edit                 & 5.00\%  & 8.23\%  & 10.73\% & 11.88\% \\
EXP-Edit                 & 5.10\%  & 25.73\% & 29.69\% & 39.38\% \\
MCmark($n$=10)       & 15.21\% & 61.26\% & 90.21\% & 99.05\% \\
MCmark($n$=20)       & 13.85\% & 58.65\% & 79.90\% & 94.38\% \\
MCmark($n$=50)       & 10.83\% & 56.46\% & 78.23\% & 91.77\% \\
MCmark($n$=100)      & 10.63\% & 50.63\% & 68.85\% & 83.85\% \\
\bottomrule
\end{tabular}
}}
\end{table}
\begin{table}[t]
\centering
\caption{Unbiasedness, detectability, and robustness scores of watermarking methods, sorted by Detectability score.}
\label{tab:method_scores}
\begin{tabular}{lccc}
\toprule
\textbf{Method} & \textbf{Unbiasedness} & \textbf{Detectability} & \textbf{Robustness} \\
\midrule
SynthID                & 0.965 & 0.974 & 0.105 \\
MCmark($n$=10)          & 0.985 & 0.971 & 0.423 \\
MCmark($n$=20)          & 0.988 & 0.959 & 0.390 \\
MCmark($n$=50)          & 0.989 & 0.945 & 0.398 \\
KGW($\delta$=2)        & 0.884 & 0.925 & 0.533 \\
MCmark($n$=100)         & 0.985 & 0.906 & 0.330 \\
Unigram($\delta$=2)    & 0.872 & 0.903 & 0.855 \\
Unigram($\delta$=1.5)  & 0.927 & 0.838 & 0.711 \\
KGW($\delta$=1.5)      & 0.935 & 0.808 & 0.350 \\
KGW($\delta$=1)        & 0.972 & 0.724 & 0.155 \\
Unigram($\delta$=1)    & 0.972 & 0.723 & 0.590 \\
DiPmark($\alpha$=0.5)  & 0.980 & 0.719 & 0.078 \\
DiPmark($\alpha$=0.4)  & 0.985 & 0.702 & 0.058 \\
EXP                    & 0.936 & 0.676 & 0.147 \\
STA($\gamma$=0.5)      & 0.988 & 0.676 & 0.079 \\
DiPmark($\alpha$=0.3)  & 0.991 & 0.660 & 0.051 \\
Unigram($\delta$=0.5)  & 0.991 & 0.502 & 0.436 \\
ITS                    & 0.907 & 0.494 & 0.032 \\
KGW($\delta$=0.5)      & 0.988 & 0.461 & 0.054 \\
\bottomrule
\end{tabular}%
\end{table}
\end{document}